\def\Beweisende{\square}            
\def\BewEnde{\hfill{\Beweisende}}
\def\phi{\varphi}
\def\RR{{\mathbb R}}
\def\HH{{\mathbb H}}
\def\EE{{\mathbb E}}
\def\FF{{\mathbb F}}
\def\dach#1{\widehat{#1}}
\def\Vkt#1{{\mathbf #1}}
\newcommand{\go}[1]{{\sf #1}}
\begin{document}

\title*{Kinematic interpretation of the 
Study quadric's ambient space}
% Use \titlerunning{Short Title} for an abbreviated version of
% your contribution title if the original one is too long
\author{G. Nawratil}
\authorrunning{G. Nawratil}
% Use \authorrunning{Short Title} for an abbreviated version of
% your contribution title if the original one is too long
%\institute{$^1$IRCCyN, CNRS, France, \email{Philippe.Wenger@irccyn.ec-nantes.fr} \\
%$^1$University of Minho, Portugal, \email{pflores@dem.uminho.pt}}
\institute{
  Institute of Discrete Mathematics and Geometry, Vienna University of Technology, Austria, 
  \email{nawratil@geometrie.tuwien.ac.at}}

%
% Use the package "url.sty" to avoid
% problems with special characters
% used in your e-mail or web address
%
\maketitle

\abstract{It is well known that real points of the Study quadric (sliced along a 3-dimensional generator space) 
correspond to displacements of the Euclidean 3-space. But we still lack of a kinematic meaning for the 
points of the ambient 7-dimensional projective space $P^7$. This paper gives one possible interpretation 
in terms of displacements of the Euclidean 4-space. From this point of view we also discuss 
the extended inverse kinematic map, motions corresponding to straight lines in $P^7$ and linear complexes of SE(3)-displacements. 
Moreover we present an application of this interpretation in the context of interactive motion design.
}

\keywords{Kinematic Map, Study Quadric, SE(4), Circular Darboux 2-Motion}

\section{Introduction}\label{sec:intro}

$\frak{Q}:=q_0+q_1\Vkt i+q_2\Vkt j+q_3\Vkt k$ with $q_0,\ldots,q_3\in\RR$ is an element of the skew field of quaternions $\HH$, 
where $\Vkt i,\Vkt j,\Vkt k$ are the quaternion units. The scalar part is $q_0$ and the pure part equals $q_1\Vkt i+q_2\Vkt j+q_3\Vkt k$, 
which is also denoted by $\frak{q}$. 
The conjugated quaternion to  $\frak{Q}=q_0+\frak{q}$ is given by 
$\widetilde{\frak{Q}}:=q_0-\frak{q}$ and $\frak{Q}$ is called a unit-quaternion for
$\frak{Q}\circ \widetilde{\frak{Q}}=1$,
where $\circ$ denotes the 
quaternion multiplication.  
Finally we can embed points $\go P$ of the Euclidean 3-space $E^3$ with coordinates $(p_1,p_2,p_3)$ 
with respect to the Cartesian frame $({\sf{O}} ; x_1, x_2, x_3)$ into the set of pure quaternions by: 
\begin{equation}\label{einbettung}
\iota_3:\RR^3\rightarrow \HH\quad \text{with}\quad (p_1,p_2,p_3)\,\,\mapsto\,\, \frak{p}:=p_1\Vkt i+p_2\Vkt j+p_3\Vkt k.
\end{equation}
Combining two quaternions by the dual unit $\varepsilon$ with the property $\varepsilon^2=0$ yields the set of  
dual quaternions $\HH+\varepsilon \HH$. 
An element $\frak{E}+\varepsilon\frak{T}$ of $\HH+\varepsilon \HH$
is called dual unit-quaternion if $\frak{E}$ is a unit-quaternion and the following condition holds:
\begin{equation}\label{study:con}
\frak{T}\circ \widetilde{\frak{Q}} + \frak{Q}\circ \widetilde{\frak{T}}=0 \quad \Longleftrightarrow\quad  e_0t_0+e_1t_1+e_2t_2+e_3t_3=0.
\end{equation}
We denote the set of dual unit-quaternions by $\EE$. By skipping the so-called Study condition of Eq.\ (\ref{study:con}) we get 
a superset of  $\EE$, which we call $\FF$. Note that both sets $\EE$ and $\FF$ 
build a group with respect to the quaternion multiplication.

Based on the usage of $\frak{E}+\varepsilon\frak{T}\in\EE$ 
it can be shown (e.g.\ \cite[ Section 3.3.2.2]{husty_sachs}) 
that the mapping of points $\go P\in E^3$ induced by any element of SE(3), can be written as:  
\begin{equation}\label{xstrich}
\delta_3:\HH\rightarrow \HH\quad \text{with}\quad
\frak p\,\,\mapsto\,\, \frak E\circ\frak p \circ \widetilde{\frak{E}} + 
(\frak T\circ \widetilde{\frak{E}}-\frak E\circ \widetilde{\frak{T}}).
\end{equation}
Moreover the mapping of Eq.\ (\ref{xstrich}) is an element of SE(3) for any 
$\frak{E},\frak{T}$ implying a dual quaternion $\frak{E}+\varepsilon\frak{T}\in\EE$. 
Note that $\delta_3(\frak p)$ is again a pure quaternion, 
where the first summand $\frak E\circ\frak p \circ \widetilde{\frak{E}}$ 
is the rotational component
and the term in the parentheses corresponds to the translational part.

As both dual quaternions $\pm(\frak{E}+\varepsilon\frak{T})\in \EE$ imply the same Euclidean motion of $E^3$, 
we consider the homogeneous 8-tuple $(e_0:\ldots:e_3:t_0:\ldots :t_3)$. 
These so-called Study parameters can be interpreted as points of a projective 7-dimensional space $P^7$. 
Therefore there is a bijection between SE(3) and all real points of $P^7$ located on the Study quadric 
$\Phi\subset P^7$ given by Eq.\ (\ref{study:con}), which  
is sliced along the 3-dimensional generator-space $G: \,\,e_0=e_1=e_2=e_3=0$.
Points of this generator-space are called "Pseudosomen" by Study \cite{study}.

\subsection{Motivation}

It is well known (e.g. \cite{pottmann_wallner}) that there exists a bijection between the set $\mathcal{L}$ of lines of the projective 3-space 
and all real points of the so-called Pl\"ucker  quadric 
\begin{equation}\label{pk_quadric}
\Psi: \quad l_{01}l_{23}+l_{02}l_{31}+l_{03}l_{12}=0
\end{equation}
of $P^5$, where the homogeneous 6-tuple $(l_{01}:l_{02}:l_{03}:l_{23}:l_{31}:l_{12})$ are the 
Pl\"ucker 
coordinates of the lines. This bijection $\mathcal{L}\rightarrow \Psi$ is also 
known as Klein mapping. 

\begin{remark}
Clearly there is an analogy between the Study quadric $\Phi$ and the  Pl\"ucker quadric $\Psi$. 
In contrast to SE(3) the set $\mathcal{L}$ is compact, i.e. the quadric $\Psi$ has not to be sliced.  
But the 2-dimensional generator space $l_{01}=l_{02}=l_{03}=0$ corresponds to the set 
of ideal lines, which therefore are the analog of the "Pseudosomen".  \hfill $\diamond$
\end{remark}

Moreover one can define an {\it extended Klein mapping} which identifies each point of $P^5$ with a  
linear complex of lines in $P^3$. As the latter can always be seen as a path-normal complex of   
instantaneous screws (different from the zero screw), which only vary in speed (i.e. they are real proportional), 
one ends up with a bijection between points of $P^5$ and {\it homogeneous screw coordinates} 
$\$:=(s_{01}:s_{02}:s_{03}:s_{23}:s_{31}:s_{12})$.  The corresponding path-normals fulfill the condition
\begin{equation}
s_{01}l_{23}+ s_{02}l_{31}+ s_{03}l_{12}+
s_{23}l_{01}+ s_{31}l_{02}+ s_{12}l_{03}= 0,
\end{equation}
which geometrically means that they are located in the intersection of $\Psi$ and the polar plane of $\$$ with respect to $\Psi$. 
This line/screw-geometric explanation of the complete $P^5$ raises the question for a kinematic interpretation of 
the points of the Study quadric's ambient space $P^7$, which is still missing. 

Moreover this study is motivated by the {\it extended inverse kinematic map} 
$\kappa^{-1}:\,\, P^7\setminus G \rightarrow SE(3)$ discussed by 
 Pfurner, Schr\"ocker and Husty (PSH) in \cite{psh}.
Under this map any point of $P^7\setminus G$ is identified with a displacement of SE(3), which corresponds to a point on $\Phi\setminus G$. 
These two points are linked by the so-called (cf.\ \cite{swc}) PSH-map: $P^7\setminus G \rightarrow  \Phi\setminus G$, 
which can be written in terms of dual quaternions as:
\begin{equation}\label{psh}
\phi:\FF\rightarrow \EE \quad\text{with}\quad \frak{E}+\varepsilon\frak{T} \,\,\mapsto\,\,  \frak{E}+\varepsilon\left[
\frak{T}-\tfrac{1}{2}\left(\frak{T}\circ \widetilde{\frak{E}} + \frak{E}\circ \widetilde{\frak{T}}
\right)\circ\frak{E}
\right].
\end{equation}
In this context it should also be mentioned that the PSH-map is the analogue of the map in $P^5$ sending 
a screw $\$$ to its axis (cf.\ \cite{swc} and \cite[Remark 10]{nawratil_ole}).
 
\begin{remark}
Selig et al.\ \cite[Theorem 1]{swc} showed that the PSH-map is equivalent to the composition of an extended 
inverse Cayley map with the direct Cayley map, where the Cayley map in question is associated to the adjoint 
representation of SE(3). Moreover it should be noted
that the basic principle of $\kappa^{-1}$ was already mentioned by Ge and Purwar \cite[Section 4]{ge_purwar}, who 
used it inter alia in \cite{dual_weights} and  \cite{purwar_ge}. 
\hfill $\diamond$
\end{remark}

We do not want to interpret points of the ambient space $P^7$ of $\Phi$ by means of $\kappa^{-1}$ as this map is not 
bijective (in contrast to the {\it extended Klein mapping}). We overcome this problem by interpreting the points of  $P^7$ 
as displacements of a motion group of the Euclidean 4-space $E^4$ (cf.\ Section \ref{sec:4space}). From this point of view 
we discuss the PSH-map (cf.\ Section \ref{sec:4space}), straight lines in $P^7$ (cf.\ Section \ref{sec:darboux}) and 
linear complexes of SE(3)-displacements  (cf.\ Section \ref{sec:lincomp}). 
We conclude the paper (cf.\ Section \ref{sec:con}) by noting an application for the interactive design of rational motions.

\section{Kinematic interpretation as a subgroup of SE(4)}\label{sec:4space}

We embed points $\go P$ of $E^4$ with coordinates $(p_0,p_1,p_2,p_3)$ 
with respect to the Cartesian frame $({\sf{O}} ; x_0,x_1,x_2,x_3)$ into the set of quaternions by 
the mapping:
\begin{equation}\label{einbettung4}
\iota_4:\RR^4\rightarrow \HH\quad \text{with}\quad (p_0,p_1,p_2,p_3)\,\,\mapsto\,\, \frak{P}:=p_0+p_1\Vkt i+p_2\Vkt j+p_3\Vkt k=
p_0+\frak{p}.
\end{equation}
Let us identify $E^3$ with any hyperplane $x_0=k$ with $k\in\RR$. Moreover we consider the set X$_4$ of displacements of SE(4), which 
fixes the ideal point in direction of the $x_0$-axis. 
It is an easy task to show that  X$_4$ is a subgroup\footnote{The 
nomenclature is due to the fact that  X$_4$ can be seen as the 4-dimensional analogue of a Schoenflies group of SE(3), 
which is denoted by X.} of SE(4). 

\begin{theorem}\label{proofX}
The mapping of points $\go P\in E^4$ induced by any element of  X$_4$, can be written as: 
\begin{equation}\label{schoenflies4}
\delta_4:\HH\rightarrow \HH\quad \text{with}\quad
\frak P\,\,\mapsto\,\, \frak E\circ\frak P \circ \widetilde{\frak{E}} - 2\frak E\circ \widetilde{\frak{T}}.
\end{equation}
Moreover the mapping of Eq.\ (\ref{schoenflies4}) is an element of  X$_4$ for any
$\frak{E},\frak{T}$ implying a dual quaternion $\frak{E}+\varepsilon\frak{T}\in \FF$. 
\end{theorem}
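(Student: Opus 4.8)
The plan is to treat the two assertions separately, both resting on the single fact that conjugation by a unit-quaternion fixes the scalar part and rotates the pure part. First I would record this fact. Since $\frak{E}$ is a unit-quaternion, the left multiplication $\frak{P}\mapsto\frak{E}\circ\frak{P}$ and the right multiplication $\frak{P}\mapsto\frak{P}\circ\widetilde{\frak{E}}$ are each orientation-preserving isometries of $\HH\cong\RR^4$, so their composition $\frak{P}\mapsto\frak{E}\circ\frak{P}\circ\widetilde{\frak{E}}$ lies in SO(4). Writing $\frak{P}=p_0+\frak{p}$ as in Eq.\ (\ref{einbettung4}) and using $\frak{E}\circ\widetilde{\frak{E}}=1$, this map sends $p_0+\frak{p}$ to $p_0+\frak{E}\circ\frak{p}\circ\widetilde{\frak{E}}$; hence it keeps the coordinate $p_0$ fixed and induces on the pure part the classical SO(3)-rotation, so in particular it fixes the ideal point of the $x_0$-axis.

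Granting this, the second assertion is almost immediate. For any unit-quaternion $\frak{E}$ and any $\frak{T}\in\HH$ the linear part of $\delta_4$ is the above SO(4)-rotation fixing the $x_0$-direction, while the summand $-2\frak{E}\circ\widetilde{\frak{T}}$ is a constant quaternion, i.e.\ a pure translation of $E^4$. Thus $\delta_4$ is a displacement of SE(4) fixing the ideal point of the $x_0$-axis, whence $\delta_4\in$ X$_4$. Here $\frak{E}+\varepsilon\frak{T}\in\FF$ only asks $\frak{E}$ to be a unit-quaternion with $\frak{T}$ free; no Study condition is required, which matches the fact that, in contrast to Eq.\ (\ref{xstrich}), the translational term now sweeps out the full four-dimensional $\HH$ instead of only the pure quaternions.

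For the first assertion I would prove surjectivity onto X$_4$. An arbitrary element of X$_4$ is a map $\frak{P}\mapsto R\,\frak{P}+\frak{c}$ with $\frak{c}\in\HH$ and $R\in$ SO(4) fixing the $x_0$-direction; as it fixes the real axis it restricts to the identity there and, since $\det R=1$, to some $R_3\in$ SO(3) on the pure quaternions. By the standard two-to-one parametrisation of SO(3) there is a unit-quaternion $\frak{E}$ (unique up to sign) with $R_3\,\frak{p}=\frak{E}\circ\frak{p}\circ\widetilde{\frak{E}}$ for all pure $\frak{p}$, and by the first paragraph then $R\,\frak{P}=\frak{E}\circ\frak{P}\circ\widetilde{\frak{E}}$ for all $\frak{P}$. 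It remains to match the translation: the equation $\frak{c}=-2\frak{E}\circ\widetilde{\frak{T}}$ is solved by $\widetilde{\frak{T}}=-\tfrac{1}{2}\widetilde{\frak{E}}\circ\frak{c}$, i.e.\ by $\frak{T}=-\tfrac{1}{2}\widetilde{\frak{c}}\circ\frak{E}\in\HH$, using $\widetilde{\frak{E}}\circ\frak{E}=1$ and $\widetilde{a\circ b}=\widetilde{b}\circ\widetilde{a}$. With this $\frak{E}$ and $\frak{T}$ we have $\frak{E}+\varepsilon\frak{T}\in\FF$, and the given element of X$_4$ coincides with $\delta_4$.

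Finally I would note that $\pm(\frak{E}+\varepsilon\frak{T})$ yield the same $\delta_4$, the signs cancelling in both summands, which is consistent with the later passage to projective Study parameters in $P^7$. The main obstacle is conceptual rather than computational: one must pin down what \emph{fixing the ideal point of the $x_0$-axis} imposes on $R$. Orthogonality alone only gives $R(1,0,0,0)=\pm(1,0,0,0)$, and $\delta_4$ reproduces precisely the eigenvalue $+1$ case; so the care lies in confirming that X$_4$ is the scalar-preserving subgroup, after which $R_3\in$ SO(3) follows automatically from $\det R=1$ and the quaternionic parametrisation becomes available, making the two directions fit together.
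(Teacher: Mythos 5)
Your proof is correct, but it takes a genuinely different route from the paper's. The paper proves that every X$_4$-displacement has the form of Eq.\ (\ref{schoenflies4}) geometrically: it quotes a corollary of \cite[Theorem 3.1]{nawratil_fundamentals} identifying $\frak P\mapsto\frak E\circ\frak P\circ\widetilde{\frak{E}}$ as the rotation about the plane through ${\sf O}$ spanned by ${\sf I}$ and ${\sf E}$ (with angle twice the angle between these vectors), quotes \cite[Proof of Theorem 2.6]{nawratil_fundamentals} for the fact that $-2\frak E\circ\widetilde{\frak{T}}$ sweeps out all translation vectors, and then invokes the classical classification (Berger, Bottema) of displacements of $E^4$ fixing the $x_0$-direction into translations, rotations about planes parallel to the $x_0$-direction, and compositions of the two, realizing each case separately. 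You bypass all of this with linear algebra: since the linear part $R$ of an X$_4$-displacement fixes the real axis pointwise, it restricts on the pure quaternions to an element of SO(3), which the standard two-to-one quaternion parametrization writes as $\frak p\mapsto\frak E\circ\frak p\circ\widetilde{\frak{E}}$, and the translation is then matched exactly by the explicit solution $\frak T=-\tfrac{1}{2}\widetilde{\frak{c}}\circ\frak E$ (your manipulation with $\widetilde{a\circ b}=\widetilde{b}\circ\widetilde{a}$ is correct). Your argument is more elementary and self-contained and yields an explicit inversion formula for $\frak T$; the paper's argument buys the geometric data (the rotation plane and angle) that it reuses later, e.g.\ in the remark giving the Grassmann coordinates of $\Gamma$ and in Section \ref{sec:lincomp}.

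The subtlety you flag at the end is real, and you resolve it the same way the paper tacitly does. Read projectively, "fixing the ideal point of the $x_0$-axis" only forces $R(1,0,0,0)=\pm(1,0,0,0)$, and the eigenvalue $-1$ case does occur inside SO(4) (e.g.\ the half-turn about the $x_2x_3$-plane), yet no such displacement has the form of Eq.\ (\ref{schoenflies4}), whose linear part always fixes the scalar axis pointwise. So the theorem holds only under the oriented reading $R\,{\sf I}={\sf I}$; the paper commits to this reading silently through its cited classification of displacements "fixing the $x_0$-direction", whereas you state it explicitly. That is an improvement in rigor, not a gap.
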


\begin{proof}
Direct computation shows that the direction of the $x_0$-axis remains fixed under the mapping of Eq.\ (\ref{schoenflies4}). 
Therefore any displacement of this form is an element of  X$_4$. It remains to show that any element of  X$_4$ can be written in this way. 
Therefore we use the following corollary which follows immediately from \cite[Theorem 3.1]{nawratil_fundamentals}: 
\begin{corollary}
The mapping $\frak P\,\,\mapsto\,\, \frak E\circ\frak P \circ \widetilde{\frak{E}}$
is a rotation about the plane $\Lambda$ through the origin $\sf O$ spanned by $\sf I:=(1,0,0,0)$ and ${\sf E}:=(e_0,e_1,e_2,e_3)$, 
where the rotation angle $\lambda$ is two times the enclosed angle of these two vectors; i.e. 
$0^{\circ} \leq \frac{\lambda}{2}\leq 180^{\circ}$. 
The angle of rotation $\lambda\geq 0$ is given with respect to the orientation of $\Lambda$, which is implied 
by the oriented triangle $\sf O, \sf I, \sf E$. 
\end{corollary}
The last sentence also implies that  $\frak E$ and $-\frak E$ yield the same rotation. Moreover from
this geometric interpretation it is clear that any rotation about a plane through $\sf O$ containing $\sf I$ 
can be written in the form $\frak P\,\,\mapsto\,\, \frak E\circ\frak P \circ \widetilde{\frak{E}}$. 

We proceed with the translation part $- 2\frak E\circ \widetilde{\frak{T}}$ of Eq.\ (\ref{schoenflies4}):  
As shown in \cite[Proof of Theorem 2.6]{nawratil_fundamentals} any translation vector of $\RR^4$ can be generated in this way. Moreover 
there is a bijection between $- 2\frak E\circ \widetilde{\frak{T}}$  and  
$\frak{T}$ for a given unit-quaternion $\frak{E}$. 

The final ingredient for our proof is the knowledge (e.g.\  \cite[Chapter 9]{berger} and \cite[Chapter I]{bottema}) 
that any displacement in $E^4$ (beside the identity transform) fixing the $x_0$-direction is either 
\begin{enumerate}[(a)]
\item
a translation or   
\item
a rotation about a plane $\Gamma$ parallel to the $x_0$-direction or
\item
a composition of a rotation about a plane $\Gamma$ parallel to the $x_0$-direction and a translation parallel to $\Gamma$.
\end{enumerate} 
Clearly, all translations (including the identity) can be written as in Eq.\ (\ref{schoenflies4}) by  setting $\frak{E}=1$. 
Moreover any displacements of item (b) and (c) can be archived by the composition of a rotation about a plane $\Lambda$ parallel to $\Gamma$, 
which contains the origin $\sf O$, and a translation. If the translation vector is orthogonal to $\Lambda$ then we obtain 
case (b) otherwise we end up with case (c). This closes the proof of Theorem \ref{proofX}. \hfill $\BewEnde$ 
\end{proof}

As both dual quaternions $\pm(\frak{E}+\varepsilon\frak{T})\in\FF$  
correspond to the same  X$_4$-motion of $E^4$, we consider again the homogeneous 8-tuple $(e_0:\ldots:e_3:t_0:\ldots :t_3)$. 
Therefore there is a bijection between  X$_4$ and all real points of $P^7\setminus G$. 
Due to 
\begin{equation}
\delta_4(\frak P) =
p_0-2(e_0t_0+e_1t_1+e_2t_2+e_3t_3) + \delta_3(\frak p)
\end{equation}
the mapping of Eq.\ (\ref{schoenflies4}) restricted to the {\it pure part} equals the mapping 
of Eq.\ (\ref{xstrich}). Moreover displacements of  X$_4$ fixing the hyperplanes $x_0=k$ correspond to points on 
$\Phi\setminus G$. These points imply SE(3)-displacements in the hyperplanes 
$x_0=k$, which we identify with $E^3$. This completes the kinematic interpretation of all points of $P^7\setminus G$; i.e.
the Study parameter space with exception of the "Pseudosomen".

\begin{remark}
For reasons of completeness we also want to give the normalized Grassmann plane coordinates of the plane $\Gamma$, 
which has been mentioned in item (b) and (c) of the proof above. 
According to \cite[Section 4.3]{nawratil_fundamentals} they can be written as a quaternionic triple 
$(\overline{\frak l},\dach{\frak l}, \frak L)$ where $(\overline{\frak l},\dach{\frak l})$ correspond to the normalized
Pl\"ucker coordinates of the ideal line of $\Gamma$, oriented from the ideal point in direction of $\sf I$ to the ideal 
point in direction of $\sf E$. A short computation shows that $\Gamma$ is determined by:
\begin{equation}
\overline{\frak l}:=\frac{\frak e}{\frak e\circ\widetilde{\frak{e}}} \qquad
\dach{\frak l}:=0 \quad\text{and}\quad
\frak L:=\frac{(\frak E-  \widetilde{\frak{E}})\circ (\overline{\frak l}\circ\frak T-\frak T\circ\overline{\frak l})}
{(\frak E-  \widetilde{\frak{E}})\circ(\widetilde{\frak{E}}-\frak E)}.
\end{equation}
Note that the intersection of the hyperplane $x_0=k$ and the plane $\Gamma$ 
yields the axis of the corresponding displacement (pure rotation or screw displacement) in $E^3$. 
\hfill $\diamond$
\end{remark}

Now we want to clarify the meaning of the extended inverse kinematic mapping $\kappa^{-1}$ in terms of our given 
interpretation. Therefore we compute the action of $\phi(\frak{E}+\varepsilon\frak{T})\in\EE$ on a 
point $\go P\in E^4$ according to Eq.\ (\ref{schoenflies4}), which yields: 
\begin{equation}
\frak P\,\,\mapsto\,\, \frak E\circ\frak P \circ \widetilde{\frak{E}} - 2\frak E\circ \widetilde{\frak{T}} + 
(\frak T\circ \widetilde{\frak{E}}+\frak E\circ \widetilde{\frak{T}}).
\end{equation}
The term in the parentheses is a pure scalar which equals the 
scalar part of $2\frak E\circ \widetilde{\frak{T}}$. Therefore the inverse kinematic mapping 
is nothing else than an orthogonal projection onto the hyperplane $x_0=k$, which we have identified with $E^3$. 

In order to make the things more descriptive we stress the following lower-dimensional analogue:  Consider the 
Schoenflies motion group X with respect to the direction $x_3$ and its subgroup of planar motions SE(2) 
parallel to the $x_1x_2$-plane. If we apply an  orthogonal projection along the $x_3$-direction (analogue of $\kappa^{-1}$) 
to a X-motion we obtain a planar one.  This is illustrated in Fig.\ \ref{fig1}.

\begin{figure}[b]
\begin{center} 
\begin{overpic}
    [height=30mm]{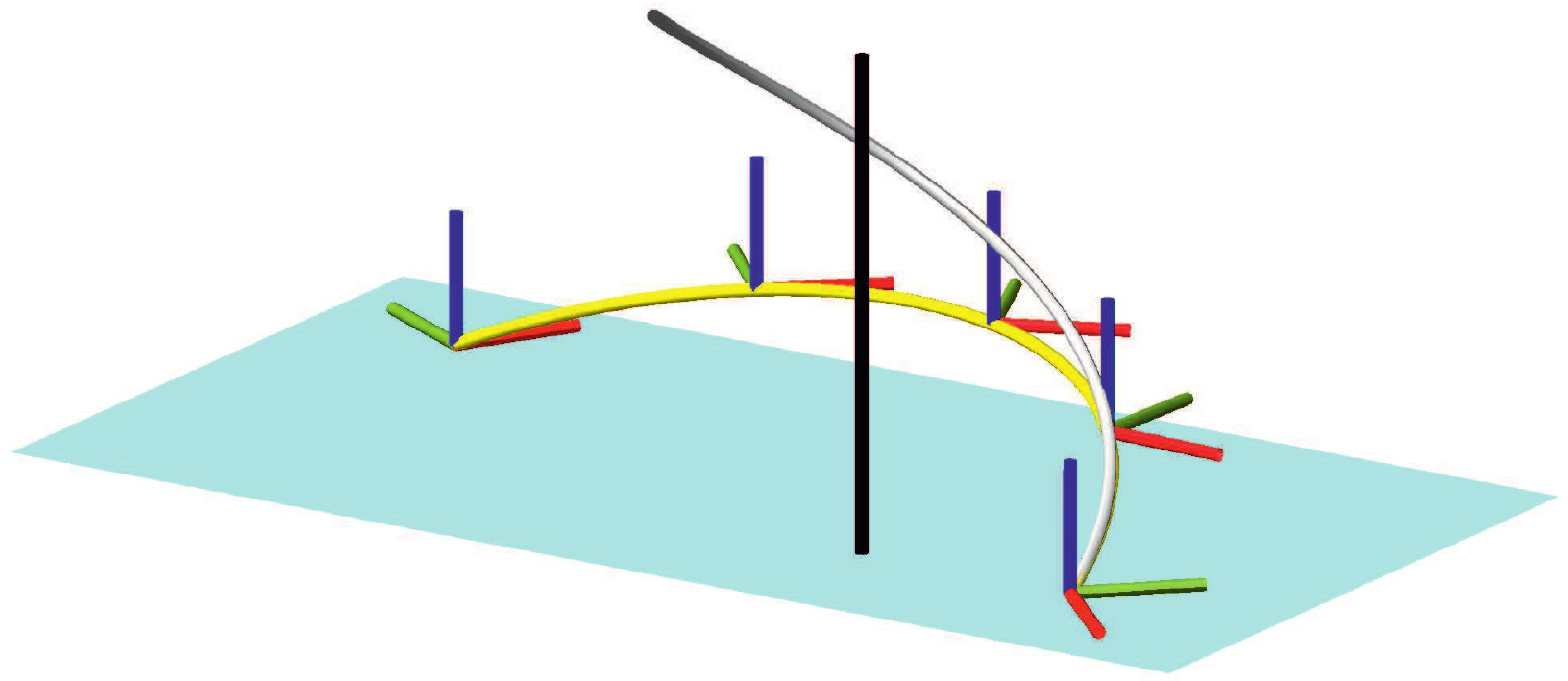}
		\begin{scriptsize}
		\put(6.5,14.5){$x_1x_2$-plane}
\end{scriptsize}         
  \end{overpic} 
	\hfill
 \begin{overpic}
    [height=25.8mm]{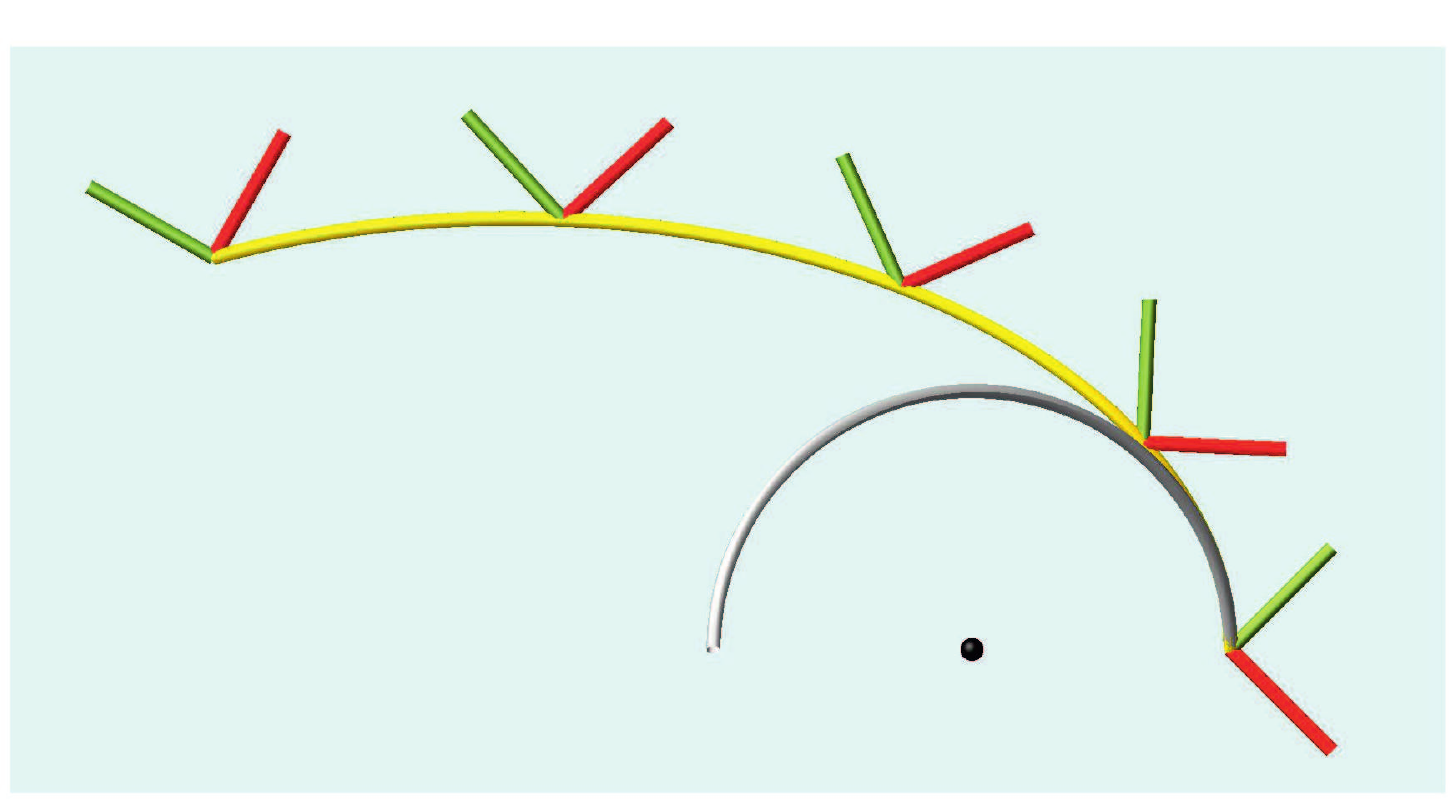}
\begin{scriptsize}
\put(88,8.5){$(0)$}
\put(80,28.5){$(\tfrac{3}{4})$}
\put(60.9,40){$(1)$}
\put(34.9,45.5){$(\tfrac{3}{4})$}
\put(9.5,42.5){$(0)$}
\end{scriptsize}       
  \end{overpic} 
\end{center} 
\caption{The planar motion (right) is obtained as top view of the Schoenflies motion (left). 
In addition we label the poses of the top view by the $x_3$-coordinate. In German such a map is  
known as "{\it kotierte Projektion}". \newline
The instantaneous screw in the starting pose (left), which is visualized by a gray helix and the black axis in  
$x_3$-direction, appears in the top view (right) as the instantaneous rotation of the planar motion. 
 } 
\label{fig1}
\end{figure}

\begin{remark}
This lower-dimensional analogue is just for illustration but would not make sense in practice. 
The reason is that SE(2) is isomorphic to $P^3$ with coordinates $(e_0:e_3:t_1:t_2)$ 
sliced along a line $e_0=e_3=0$, but X is isomorphic to a hyperquadric $e_0t_0+e_3t_3=0$ of $P^5$ with coordinates 
$(e_0:e_3:t_0:t_1:t_2:t_3)$, which is sliced along the 3-plane $e_0=e_3=0$. Therefore the superset X of SE(2) has the "nasty"
quadratic side-condition, which is exactly opposite for X$_4$ and SE(3). \hfill $\diamond$
\end{remark}

\newpage
 
\section{Straight lines in the Study parameter space}\label{sec:darboux}

It is known (cf.\ \cite[Section 3.2]{purwar_ge}, \cite[Section 4]{swc}, \cite[Section 3]{schrocker})  
that straight lines in $P^7$ are in general send by the PSH-map to vertical Darboux motions (see e.g.\ 
\cite[pp.\ 321--322]{bottema} and \cite{krames}). 
In special cases they correspond to rotations about fixed axes or translations along fixed directions.

As preparatory work for the study of straight lines of $P^7$ in terms of X$_4$-motions, we recall some 
results on so-called Darboux 2-motions, which were studied by Karger \cite{karger_darboux} in $E^n$. 
In  \cite[Definition 1]{karger_darboux} he defined a Darboux 2-motion as a motion, where all trajectories are 
(1) planar and (2) affine equivalent in the kinematical sense. The second condition is redundant 
for $E^4$ due to a result given by Karger in \cite{karger_4space}. 
Therefore all motions of SE(4) with planar trajectories, which are known \cite[Theorem 4]{karger_darboux} to be 
either ellipses or line segments, are given in \cite{karger_darboux}. 
We focus on Darboux 2-motions in $E^4$, where all points have circular trajectories.

\begin{theorem}\label{lem1}
A circular Darboux 2-motion, which is neither spherical nor a pure translation, has to be composed of a rotation about a fixed plane and 
a circular translations parallel to this plane. 
\end{theorem}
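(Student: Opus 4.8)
The plan is to combine the classification of Darboux 2-motions recalled above with a normal-form analysis of the rotational part. By Karger's result every Darboux 2-motion is, in a suitable Cartesian frame, of the form $\go P\mapsto A(t)\,\go P+b(t)$ with $A(t)\in SO(4)$ and all trajectories ellipses. I would normalise $A(0)=\mathrm{id}$ and $b(0)=0$, so that the trajectory of the body point $\go P$ is $c_{\go P}(t)=A(t)\,\go P+b(t)$, and reduce the theorem to imposing that each of these ellipses degenerates to a circle and then reading off the geometry. Note that the converse (that the asserted composition really produces a circular Darboux 2-motion) is a short direct computation, so the content is the structural statement.

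First I would treat the rotational part. Writing the generator of $A(t)$ in its canonical block form splits $\RR^4$ into two invariant planes carrying angular speeds $\omega_1,\omega_2$; the orbit of a point with nonzero components in both planes is the planar ellipse $\cos(\cdot)\,u+\sin(\cdot)\,v$, which is a circle precisely when $|u|=|v|$ and $u\perp v$, i.e.\ when $|\omega_1|=|\omega_2|$. This produces a trichotomy. If $|\omega_1|=|\omega_2|\neq0$ the rotation is isoclinic (Clifford), every orbit of $A(t)$ is already a circle; imposing circularity of the \emph{combined} trajectories then forces a common fixed point, so the motion is \emph{spherical}. If $\omega_1=\omega_2=0$ then $A(t)\equiv\mathrm{id}$ and we are left with a \emph{pure (circular) translation}. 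The remaining, generic case is $\omega_2=0\neq\omega_1$: a \emph{simple rotation about a fixed plane} $\Lambda$ (the zero-speed plane), which is the situation the theorem addresses.

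In this surviving case I would analyse the interaction of $A(t)$ with $b(t)$. Splitting $b(t)=b_{\parallel}(t)+b_{\perp}(t)$ into the components parallel and orthogonal to $\Lambda$, the orthogonal part lives in the rotating plane at the rotation's own frequency and merely relocates the centre of the circular arc there; it is therefore absorbed by replacing $\Lambda$ by a parallel fixed plane $\Gamma$. The parallel part $b_{\parallel}(t)$ survives as a genuine translation in the $\Gamma$-direction, and since the trajectory of a point lying on $\Gamma$ equals $b_{\parallel}(t)$ up to a constant and must itself be a circle, $b_{\parallel}$ traces a circle; this is exactly a circular translation parallel to $\Gamma$. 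Hence the motion is the asserted composition of a rotation about the fixed plane $\Gamma$ with a circular translation parallel to $\Gamma$.

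The delicate point, and the main obstacle, is the phase-locking that makes the \emph{combined} trajectory of \emph{every} point circular, not merely the rotational orbit and the translation circle separately: one must check that the rotation and the circular translation run at the same frequency and in matched phase, so that the ellipse $\cos(\cdot)\,u+\sin(\cdot)\,v$ satisfies $|u|=|v|$ and $u\perp v$ simultaneously for all body points. This is also where the two excluded families enter, since it is exactly the circularity of the combined motion that collapses the isoclinic case onto a spherical one and rules out anything strictly between the three listed types. Carrying this out uniformly in $\go P$, rather than point by point, is the crux of the argument.
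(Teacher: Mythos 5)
Your proposal has genuine gaps, and they sit exactly where the theorem's content lies. First, your normal-form analysis of the rotational part presupposes that $A(t)$ is a one-parameter subgroup of $SO(4)$, i.e.\ that it has \emph{fixed} invariant planes carrying \emph{constant} angular speeds $\omega_1,\omega_2$; a Darboux 2-motion is a priori only a curve in $SE(4)$, and the existence of such a rigid normal form is precisely what Karger's classification supplies. The paper does not re-derive it: its proof runs through Karger's two explicit types (Type 1 with the matrix of Eq.~(\ref{karger}), Type 2 spherical plus a translation $(0,0,\star(\tau),0)^T$) and imposes circularity on each. Your trichotomy also silently drops the case of two distinct nonzero frequencies (a non-isoclinic double rotation), which can only be excluded by invoking planarity of trajectories -- an argument you never make -- and your isoclinic-forces-spherical claim is asserted, not proved.

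Second, and more seriously, the ``absorption'' step is false as stated. A translation component lying in the rotating plane at the rotation's own frequency can be absorbed into a parallel shifted rotation plane $\Gamma$ only if it has the special form $\left(\mathrm{id}-A(\tau)\right)q$ for a fixed $q$. Karger's Type 1 normal form contains exactly a same-frequency in-plane term, $\tfrac{\nu}{2}\left(\cos(\tau)-1\right)$ in the $x_3$-direction, that is \emph{not} of this form and cannot be recentered; the paper's proof shows that circularity of the trajectories forces $\nu=0$, by intersecting the trajectory conics with the hyperplane at infinity and requiring the ideal points to lie on the absolute sphere $h_0'^2+h_1'^2+h_2'^2+h_3'^2=0$. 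Your argument would sweep such a term into a shift of the plane and thereby miss the one condition the theorem actually needs. You do flag the ``phase-locking'' of rotation and translation, uniformly in the body point, as the crux -- but you leave it unresolved, and that crux \emph{is} the proof: in the paper it is precisely the computations $\nu=0$ (Type 1) and $\star(\tau)\equiv 0$ (Type 2, whence the motion is spherical and excluded). As it stands, your text is a plausible outline, not a proof.
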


\begin{proof}
In the last remark of \cite{karger_darboux}, Karger mentioned that there are two types of Darboux 2-motions in $E^4$. 
In the following we study them separately:
\begin{enumerate}[1.]
\item
{\bf Type 1} {\it (characterized by Karger's parameters $k_1=1$, $\delta_1=\delta_2=1$; cf.\ \cite{karger_darboux}):} 
According to \cite[Theorem 2]{karger_darboux} this motion can be written in matrix form
(adapted in notation) for a proper choice of coordinate frames as: 
\begin{equation}\label{karger}
\begin{pmatrix}
p_0' \\ p_1' \\p_2' \\ p_3'
\end{pmatrix}
=
\begin{pmatrix}
1 & 0 & 0 & 0 \\
0 & 1 & 0 & 0 \\
0 & 0 & \cos(\tau) & -\sin(\tau) \\
0 & 0 & \sin(\tau) & \cos(\tau) \\
\end{pmatrix}
\begin{pmatrix}
p_0 \\ p_1 \\p_2 \\ p_3
\end{pmatrix}
+
\begin{pmatrix}
\tfrac{\beta}{2\sqrt{\gamma}}\left(\gamma\sin(\tau)+1-\cos(\tau)\right) \\
\tfrac{\beta}{2\sqrt{\gamma}}\sqrt{\gamma^2-1}\left(1-\cos(\tau)\right) \\
0 \\
\tfrac{\nu}{2}\left(\cos(\tau)-1\right)
\end{pmatrix},
\end{equation}
where $\beta,\gamma,\nu\in\RR$ with $\beta\geq 0$, $\gamma\geq 1$, $\nu\geq 0$ and $\tau$ denoting the motion parameter. 
We apply the half-angle substitution $\cos(\tau)=\frac{1-t^2}{1+t^2}$ and  $\sin(\tau)=\frac{2t}{1+t^2}$ 
and introduce homogenous coordinates $p_i':=\frac{h_0'}{h'}$ for $i=0, \ldots ,3$. Now we intersect the 
trajectories with the hyperplane at infinity $h'=0$, which implies $t=\pm I$ where $I$ denotes the complex unit. 
Then it can easily be seen that $\nu=0$  
is a necessary condition for these two intersection points to be located 
on the absolute sphere $h_0'^2+h_1'^2+h_2'^2+h_3'^2=0$ independent of the point $(p_0,p_1,p_2,p_3)$. 
As a consequence the rotation plane ($x_0x_1$-plane) is fixed.  Then points of this plane can only have circular trajectories 
if  the translation is circular.
\item
{\bf Type 2} {\it (characterized by Karger's parameters $k_1=2$, $\delta_1=\delta_2=0$; cf.\ \cite{karger_darboux}):}
According to \cite[Theorem 2]{karger_darboux} this motion can be composed of a spherical motion about the origin $\sf O$, 
where the orthogonal matrix is given in \cite[Eq.\ (26)]{karger_darboux}, and a translational vector of the form $(0,0,\star(\tau),0)^T$ 
(cf.\  \cite[Eq.\ (27)]{karger_darboux}). 
Therefore the origin can only have a linear trajectory or it is fixed ($\Rightarrow$ spherical motion) 
if the expression $\star(\tau)$ vanishes independently of $\tau$. This finishes the proof. 
\hfill $\BewEnde$
\end{enumerate}
\end{proof}

\begin{remark}
On the first sight it is not clear how the remaining parameters $\beta\geq 0$ and $\gamma\geq 1$ of Type 1 have to be chosen 
such that we get a circular translation. 
A short computation shows that if we set $\beta:=c/\sqrt{\gamma}$ with $c\geq 0$ and then applying the 
limit $\gamma\rightarrow\infty$ to the translation vector of Eq.\ (\ref{karger}) yields the following expression:  
\begin{equation}\label{trans_part}
c\left(\sin(\tau),1-\cos(\tau),0,0\right)^T.
\end{equation}
Now it can easily be verified that the trajectory of each point $\go P\in E^4$ is circular for $c>0$. 
Up to the best knowledge of the author, the existence of these circular Darboux 2-motions 
has not been mentioned in the literature until now. 
Moreover they belong to the class of Borel-Bricard motions of $E^4$ (i.e. all points 
have hyperspherical trajectories). Further examples of this class are given in \cite{vogler,wunderlich}. \hfill $\diamond$
\end{remark}

\begin{theorem}
A straight line $\in P^7\setminus G$ corresponds to one of the following X$_4$-motions and vice versa:
\begin{enumerate}[(i)]
\item
a translation along a fixed direction,
\item
a rotation about a fixed plane, 
\item
a circular Darboux 2-motion, which is neither spherical nor a pure translation.  
\end{enumerate} 
\end{theorem}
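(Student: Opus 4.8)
The plan is to read the kinematic type off the point-trajectories and then to invoke Theorem \ref{lem1}. First I would write a generic straight line of $P^7\setminus G$ in dual-quaternion form as $\frak E(s)+\varepsilon\frak T(s)=(\frak A+\varepsilon\frak C)+s(\frak B+\varepsilon\frak D)$ with projective parameter $s\in\RR\cup\{\infty\}$, so that both $\frak E(s)=\frak A+s\frak B$ and $\frak T(s)=\frak C+s\frak D$ are affine-linear in $s$. By the bijection of Theorem \ref{proofX} together with the homogeneity of the normalised form of Eq.\ (\ref{schoenflies4}), the point with parameter $s$ is the $X_4$-displacement $D_s(\frak P)=[\frak E(s)\circ\frak P\circ\widetilde{\frak E(s)}-2\frak E(s)\circ\widetilde{\frak T(s)}]/d(s)$, where $d(s):=\frak E(s)\circ\widetilde{\frak E(s)}$. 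I would then study the trajectory $s\mapsto D_s(\frak P_0)$ of an arbitrary point $\go P_0\in E^4$ with quaternion $\frak P_0$.

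Two observations turn this trajectory into a circle. Writing the numerator as $\frak N(s):=\frak E(s)\circ\frak P_0\circ\widetilde{\frak E(s)}-2\frak E(s)\circ\widetilde{\frak T(s)}$, it is quaternion-valued and quadratic in $s$, while $d(s)$ is a positive-definite scalar quadratic. Homogenising the trajectory in $P^4$ as $s\mapsto[\,d(s):\frak N(s)\,]$ exhibits all five homogeneous coordinates as quadratics in $s$, so the image lies in the plane spanned by the three coefficient vectors; hence every trajectory is a planar conic. To see that this conic is a circle I would locate its two ideal points, which occur at the complex conjugate roots $\sigma,\bar\sigma$ of $d$. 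The crucial step is the left factorisation $\frak N(\sigma)=\frak E(\sigma)\circ\frak v$ with $\frak v:=\frak P_0\circ\widetilde{\frak E(\sigma)}-2\widetilde{\frak T(\sigma)}$, which gives $\frak N(\sigma)\circ\widetilde{\frak N(\sigma)}=(\frak v\circ\widetilde{\frak v})\,\frak E(\sigma)\circ\widetilde{\frak E(\sigma)}=(\frak v\circ\widetilde{\frak v})\,d(\sigma)=0$. Thus both ideal points lie on the absolute sphere and therefore are the isotropic points of the trajectory's carrier plane, which forces the conic to be a circle.

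Since all trajectories are circular, the motion is a circular Darboux 2-motion (planarity of every path suffices in $E^4$, the affine-equivalence being automatic), and I would finish by sorting out the degenerate possibilities via Theorem \ref{lem1}. If the projection of the line to the $\frak E$-space $P^3$ is a single point (equivalently $\frak A\parallel\frak B$), the rotational part is constant and the motion is a translation along a fixed direction, giving case (i); this is exactly the case in which $d$ fails to have genuinely complex roots. If the circular motion is spherical it is a rotation about a fixed plane, which is case (ii). In all remaining cases Theorem \ref{lem1} yields the composition of a rotation about a fixed plane with a circular translation parallel to it, i.e.\ case (iii).

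For the converse I would exhibit each type as a line, using that the half-angle substitution $t=\tan(\tau/2)$ linearises the Study parameters. Case (i) is parameterised by $\frak E=1$ and an affine-linear $\frak T(s)$; case (ii), a rotation about the $x_0x_1$-plane, by $\frak E=1+t\,\Vkt i$, $\frak T=0$, i.e.\ the line $[1:t:0:0:0:0:0:0]$. For case (iii) I would start from the normal form of Eqs.\ (\ref{karger})--(\ref{trans_part}); after the substitution one gets $\frak E=1+t\,\Vkt i$, and from the identity $-2\frak E\circ\widetilde{\frak T}=2ct\,\frak E$ one reads off $\frak T=-ct$, so the motion traces the straight line $[1:t:0:0:-ct:0:0:0]\subset P^7\setminus G$. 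I expect the circularity argument of the second paragraph -- in particular recognising the isotropic-point criterion behind the factorisation $\frak N(\sigma)=\frak E(\sigma)\circ\frak v$ -- to be the main obstacle, the planarity count and the converse parameterisations being then routine.
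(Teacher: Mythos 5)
Your forward direction is built on a circularity argument that is correct and in fact sharper than the paper's own treatment: the paper merely notes that the trajectories are rational quadratic, hence conics, and delegates the check that both ideal points lie on the absolute sphere to ``direct computations with {\sc Maple}'', whereas your factorisation $\frak N(\sigma)=\frak E(\sigma)\circ\frak v$ together with the observation that $\frak v\circ\widetilde{\frak v}$ is a (complex) scalar settles this in one line. The gap sits in your case-sorting, in the step you treat as routine: ``if the circular motion is spherical it is a rotation about a fixed plane.'' As a statement about circular Darboux 2-motions of $E^4$ this is \emph{false}: a left-isoclinic rotation $\frak P\mapsto(\cos s+\sin s\,\Vkt i)\circ\frak P$ has all trajectories circular and fixes the origin, yet fixes no plane pointwise; such spherical motions are precisely what the proof of Theorem \ref{lem1} sets aside (``$\Rightarrow$ spherical motion'') rather than classifies, so Theorem \ref{lem1} gives you nothing here. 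The implication does hold for motions coming from straight lines, but that requires an argument, e.g.: conjugate by translations (left/right multiplication by constant dual quaternions is linear on $P^7$, so lines map to lines) until the stationary point is the origin, deduce $\frak T(s)\equiv 0$, and then verify that every relative displacement $\frak E(s_1)\circ\widetilde{\frak E(s_2)}$ has pure part proportional to the pure part of $\frak A\circ\widetilde{\frak B}$, so that all relative displacements are rotations about one common plane. The paper avoids this issue by a different case split: its case (ii) is ``the normalised Study expression is constant'', which places all relative motions on the Study quadric and identifies them as rotations about a fixed line of $E^3$, so sphericity never has to be discussed.

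Your converse for case (iii) has a second, independent gap --- one the paper explicitly warns about. Karger's normal form, Eqs.\ (\ref{karger}) and (\ref{trans_part}), is obtained by an \emph{arbitrary} adapted choice of coordinates in $E^4$; but for an X$_4$-motion the $x_0$-direction is predetermined and cannot be rotated into a preferred position relative to the translation circle, since the frame change that would do so is a rotation about the $x_2x_3$-plane, which moves the $x_0$-direction and hence lies outside X$_4$. Consequently the circular translation carries an invariant phase $\varrho$, and motions with different $\varrho$ are \emph{not} X$_4$-congruent; your line $[1:t:0:0:-ct:0:0:0]$ realises only the subfamily $\varrho=0$, whereas the paper needs $\frak E_2=\Vkt i$, $\frak T_2=-c\cos(\varrho)+c\sin(\varrho)\Vkt i$ to cover all of case (iii). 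Finally, for both (ii) and (iii) you exhibit a single representative per congruence class, so your converse silently relies on the closing observation (stated in the paper) that a change of reference frame within X$_4$ acts linearly on the Study parameters and therefore maps straight lines to straight lines; adding that observation and the $\varrho$-parameter would complete your argument.
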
 

\begin{proof}
Given are two different displacements of X$_4$ by $\frak{E}_1+\varepsilon\frak{T}_1\in \FF$ and  $\frak{E}_2+\varepsilon\frak{T}_2\in \FF$ 
which span a straight line in $P^7\setminus G$. The corresponding quaternion $\frak E(s)+\varepsilon \frak T(s)\in \FF$ of the motion can be written 
in dependence of the parameter $s$ as:
\begin{equation}
\frak E(s)+\varepsilon \frak T(s):=
\frac{\frak E_s}{\sqrt{\frak E_s\circ\widetilde{\frak{E}_s}}} + \varepsilon \frac{\frak T_s}{\sqrt{\frak E_s\circ\widetilde{\frak{E}_s}}} 
\quad\text{with}\quad
\begin{cases}
\frak E_s:=s\frak E_1+(1-s)\frak E_2 \\
\frak T_s:=s\frak T_1+(1-s)\frak T_2
\end{cases}
\end{equation}
If the given two displacements have the same orientation ($\frak{E}_1=\pm \frak{E}_2$) then the motion is trivially a 
pure translation along a fixed direction; i.e.\ case (i). Therefore we can assume $\frak{E}_1\neq\pm \frak{E}_2$ for the remainder of the proof. 

Moreover if the expression $\frak T(s)\circ \widetilde{\frak{E}}(s)+\frak E(s)\circ \widetilde{\frak{T}}(s)$ is constant $-k$ for all $s\in\RR$, 
then the relative motion between each two poses of this motion is located on the Study quadric. Therefore 
the SE(3)-motion in the hyperplane $x_0=k$ has to be a pure rotation about a fixed line. As a consequence we get case (ii). 

If we also exclude this scenario we are left with the general case. We can compute the trajectories by means of 
Eq.\ (\ref{schoenflies4}). As the obtained coordinates are rational quadratic polynomials in $s$, the trajectories 
are conic sections. It remains to show that they are circles, which can be done analogously to Theorem \ref{lem1}; i.e. 
both intersection points with the hyperplane at infinity have to be located on the absolute sphere. This 
can be verified by e.g.\ direct computations with {\sc Maple}. 

Now we tackle the reverse direction, that every  X$_4$-motion of type (i), (ii) or (iii) corresponds to a straight line in  $P^7\setminus G$.  
For (i) and (ii) this is trivial and left to the reader. For (iii) we proceed as follows: Without loss of generality we can choose a reference frame 
in a way that the rotation plane is the $x_0x_1$-plane. This corresponds to the rotation matrix of Eq.\ (\ref{karger}). But as we 
cannot chose the $x_0$-direction freely (it is predetermined) 
the translation vector of Eq.\ (\ref{trans_part}) has to be rotated about the 
$x_2x_3$-plane by the angle $\varrho$, which yields:
\begin{equation*}
c\left(\cos(\varrho)\sin(\tau)-\sin(\varrho)(1-\cos(\tau)),
\sin(\varrho)\sin(\tau)+\cos(\varrho)(1-\cos(\tau)),0,0\right)^T.
\end{equation*}
It can be checked that the corresponding straight line in $P^7\setminus G$ is obtained by:
\begin{equation*}
\frak E_1=1, \quad \frak T_1=0, \quad
\frak E_2=\Vkt i, \quad \frak T_2=-c\cos(\varrho)+c\sin(\varrho)\Vkt i 
\quad \text{and} \quad s=\frac{1}{1+t}.
\end{equation*}
As a change of the reference frame within the group X$_4$ implies a linear transformation of the 
corresponding motion parameters in $P^7$ ($\Rightarrow$ a straight line is mapped to a straight line), the proof is completed. 
\hfill $\BewEnde$
\end{proof}

\section{Linear complex of SE(3)-displacements}\label{sec:lincomp}

As already discussed in the proof of Theorem \ref{proofX} 
every X$_4$-displacement determined by $\frak{E}+\varepsilon\frak{T}\in \FF$ is a composition of 
a rotation about a plane $\Gamma$ parallel to the plane spanned by $\sf I$ and $\sf E$ and 
a translation vector $\sf C$ parallel to $\Gamma$. 
Therefore we can consider a further geometric quantity; namely the angle enclosed by $\sf E$ and $\sf C$. 
For orthogonal vectors  $\sf E$ and $\sf C$ we call the X$_4$-displacement {\it orthogonal}. 
This is the case if and only if the translation part $- 2\frak E\circ \widetilde{\frak{T}}$ of Eq.\ (\ref{schoenflies4})
is orthogonal to $\sf E$ which is equivalent with the fact that $\frak{T}$ is a pure quaternion. This yields
the following theorem:

\begin{theorem}\label{ortho}
A X$_4$-displacement defined by $\frak{E}+\varepsilon\frak{T}\in \FF$ is orthogonal if and only if $\frak{T}$ is a pure quaternion.
\end{theorem}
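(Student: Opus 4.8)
The plan is to make rigorous the two-line argument sketched just above the theorem: first reduce the geometric condition ${\sf E}\perp{\sf C}$ on the genuine screw translation to a condition on the \emph{explicit} translation quaternion $-2\frak E\circ\widetilde{\frak T}$ appearing in Eq.\ (\ref{schoenflies4}), and then evaluate the resulting inner product by a short quaternionic computation. Throughout, $\langle\cdot,\cdot\rangle$ denotes the Euclidean inner product of the associated coefficient $4$-tuples, quaternions being identified with vectors of $\RR^4$ via $\iota_4$.

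First I would pin down the intrinsic translation vector ${\sf C}$. By the decomposition obtained in the proof of Theorem \ref{proofX}, the displacement is the rotation about the plane $\Lambda$ spanned by ${\sf I}$ and ${\sf E}$ through ${\sf O}$, followed by the translation by the full vector $\frak C:=-2\frak E\circ\widetilde{\frak T}$. Splitting $\frak C=\frak C_{\parallel}+\frak C_{\perp}$ into its components parallel and orthogonal to $\Lambda$, the orthogonal part $\frak C_{\perp}$ merges with the rotation to yield a rotation about a parallel plane $\Gamma$, so the intrinsic screw translation is precisely ${\sf C}=\frak C_{\parallel}$. Since ${\sf E}$ lies in the direction space of $\Gamma$, we have $\langle{\sf E},\frak C_{\perp}\rangle=0$ and hence $\langle{\sf E},{\sf C}\rangle=\langle{\sf E},\frak C\rangle$. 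This is the one step that needs care, since it is what lets me trade the a priori implicit vector ${\sf C}$ for the explicit formula $-2\frak E\circ\widetilde{\frak T}$; everything afterwards is mechanical.

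Second, I would compute $\langle{\sf E},\frak C\rangle$ using the inner-product identity behind Eq.\ (\ref{study:con}), namely $\frak A\circ\widetilde{\frak B}+\frak B\circ\widetilde{\frak A}=2\langle\frak A,\frak B\rangle$ for any quaternions $\frak A,\frak B$. From $\frak C=-2\frak E\circ\widetilde{\frak T}$ one gets $\widetilde{\frak C}=-2\frak T\circ\widetilde{\frak E}$, so that
\[
\frak E\circ\widetilde{\frak C}+\frak C\circ\widetilde{\frak E}
=-2\,\frak E\circ\bigl(\frak T+\widetilde{\frak T}\bigr)\circ\widetilde{\frak E}
=-4\,t_0\,\frak E\circ\widetilde{\frak E},
\]
where $\frak T+\widetilde{\frak T}=2t_0\in\RR$ with $t_0$ the scalar part of $\frak T$; hence $\langle{\sf E},\frak C\rangle=-2\,t_0\,\frak E\circ\widetilde{\frak E}$. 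Since we work on $P^7\setminus G$ the quaternion $\frak E$ is nonzero, so $\frak E\circ\widetilde{\frak E}>0$, and therefore $\langle{\sf E},{\sf C}\rangle=\langle{\sf E},\frak C\rangle$ vanishes if and only if $t_0=0$, i.e.\ if and only if $\frak T$ is a pure quaternion; by the definition of an orthogonal X$_4$-displacement this is the assertion. I expect the main obstacle to be the bookkeeping of the first step — correctly identifying ${\sf C}$ with the parallel part of $-2\frak E\circ\widetilde{\frak T}$ — rather than this final calculation, which is immediate.
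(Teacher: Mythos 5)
Your proposal is correct and follows essentially the same route as the paper, whose ``proof'' is just the informal discussion preceding the theorem: reduce $\langle{\sf E},{\sf C}\rangle=0$ to orthogonality of ${\sf E}$ and the explicit translation part $-2\frak E\circ\widetilde{\frak T}$, then observe this holds iff $\frak T$ is pure. You merely fill in the two details the paper leaves implicit --- that the component of $-2\frak E\circ\widetilde{\frak T}$ orthogonal to $\Lambda$ is absorbed into the shift of the rotation plane and is automatically orthogonal to ${\sf E}$, and the quaternionic identity giving $\langle{\sf E},\frak C\rangle=-2t_0\,\frak E\circ\widetilde{\frak E}$ --- both of which are carried out correctly.
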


Given is a displacements of X$_4$ by $\frak{E}+\varepsilon\frak{T}\in \FF$. Now we want to compute all  
X$_4$-displacements fixing the hyperplanes $x_0=k$ (SE(3)-displacements given by $\frak{F}_i+\varepsilon\frak{U}_i\in \EE$), 
which are located in the polar plane of $\frak{E}+\varepsilon\frak{T}\in \FF$ with respect to the Study quadric $\Phi$. 
The corresponding condition for this so-called {\it linear complex of SE(3)-displacements} (with respect to $\frak{E}+\varepsilon\frak{T}\in \FF$)
can be written as:
\begin{equation}\label{lincon}
(\frak{F}_i\circ\widetilde{\frak{T}}+\frak{T}\circ\widetilde{\frak{F}}_i) + 
(\frak{E}\circ\widetilde{\frak{U}}_i+\frak{U}_i\circ\widetilde{\frak{E}}) = 0. 
\end{equation}
Then the relative motion $\frak{G}_i+\varepsilon\frak{V}_i\in \FF$ transforming the pose defined by $\frak{F}_i+\varepsilon\frak{U}_i\in \EE$ into the 
pose, which corresponds to the pole $\frak{E}+\varepsilon\frak{T}\in \FF$, is given by:
\begin{equation}\label{relativ}
\frak G_i:=\frak E\circ\widetilde{\frak{F}}_i \quad\text{and}\quad
\frak{V}_i:=\frak E\circ \widetilde{\frak{U}}_i + \frak T\circ \widetilde{\frak{F}}_i.
\end{equation}
Now Eq.\ (\ref{lincon}) implies that $\frak{V}_i$ is a pure quaternion. Under consideration of Theorem \ref{ortho}
we get the following result.

\begin{theorem}
The linear complex of SE(3)-displacements with respect to a given X$_4$-displacement $\frak{E}+\varepsilon\frak{T}\in \FF$ 
consists of all points of the Study quadric $\Phi\setminus G$, which correspond to $\frak{F}_i+\varepsilon\frak{U}_i\in \EE$, such that  
the relative motion  $\frak{G}_i+\varepsilon\frak{V}_i\in \FF$ of Eq.\ (\ref{relativ}) is an orthogonal X$_4$-displacement. 
\end{theorem}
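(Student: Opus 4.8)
The plan is to reduce the statement to a single quaternionic identity: that twice the scalar part of the relative translation quaternion $\frak{V}_i$ is exactly the left-hand side of the linear complex condition in Eq.~(\ref{lincon}). Since the linear complex is by definition the set of points of $\Phi\setminus G$ whose Study parameters satisfy Eq.~(\ref{lincon}), and since Theorem~\ref{ortho} characterises orthogonality of the relative X$_4$-displacement $\frak{G}_i+\varepsilon\frak{V}_i\in\FF$ solely through the purity of $\frak{V}_i$, it suffices to prove that Eq.~(\ref{lincon}) and the purity of $\frak{V}_i$ are equivalent.

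First I would observe that a quaternion $\frak{A}$ is pure precisely when $\frak{A}+\widetilde{\frak{A}}=0$, since $\frak{A}+\widetilde{\frak{A}}$ equals twice its scalar part. Hence, by Theorem~\ref{ortho}, the relative motion is orthogonal if and only if $\frak{V}_i+\widetilde{\frak{V}}_i=0$. Next I would evaluate $\frak{V}_i+\widetilde{\frak{V}}_i$ directly from the definition $\frak{V}_i:=\frak E\circ\widetilde{\frak{U}}_i+\frak T\circ\widetilde{\frak{F}}_i$ of Eq.~(\ref{relativ}). Using that conjugation reverses products, $\widetilde{\frak{A}\circ\frak{B}}=\widetilde{\frak{B}}\circ\widetilde{\frak{A}}$, together with $\widetilde{\widetilde{\frak{A}}}=\frak{A}$, one computes
\begin{equation*}
\widetilde{\frak{V}}_i=\frak{U}_i\circ\widetilde{\frak E}+\frak{F}_i\circ\widetilde{\frak T},
\end{equation*}
and therefore
\begin{equation*}
\frak{V}_i+\widetilde{\frak{V}}_i
=\bigl(\frak E\circ\widetilde{\frak{U}}_i+\frak{U}_i\circ\widetilde{\frak E}\bigr)
+\bigl(\frak T\circ\widetilde{\frak{F}}_i+\frak{F}_i\circ\widetilde{\frak T}\bigr),
\end{equation*}
whose right-hand side is precisely the left-hand side of Eq.~(\ref{lincon}).

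From this identity the equivalence is immediate: Eq.~(\ref{lincon}) holds if and only if $\frak{V}_i+\widetilde{\frak{V}}_i=0$, i.e.\ if and only if $\frak{V}_i$ is a pure quaternion, which by Theorem~\ref{ortho} is equivalent to the relative X$_4$-displacement $\frak{G}_i+\varepsilon\frak{V}_i$ being orthogonal. Intersecting both descriptions with $\Phi\setminus G$ (a restriction common to both sides, hence irrelevant for the equivalence) then yields the claimed characterisation of the linear complex.

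I do not expect a genuine obstacle here: the argument is a short algebraic verification whose only delicate point is the careful bookkeeping of the order reversal under quaternion conjugation when expanding $\widetilde{\frak{V}}_i$. The one hypothesis worth double-checking is that $\frak{G}_i+\varepsilon\frak{V}_i$ indeed lies in $\FF$, so that Theorem~\ref{ortho} applies; this is already guaranteed by Eq.~(\ref{relativ}) and the fact that $\FF$ is a group under quaternion multiplication.
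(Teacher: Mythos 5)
Your proof is correct and takes essentially the same approach as the paper: the paper's argument is precisely that Eq.~(\ref{lincon}) forces $\frak{V}_i$ to be a pure quaternion, combined with Theorem~\ref{ortho}. You have simply made explicit the algebraic identity $\frak{V}_i+\widetilde{\frak{V}}_i=\bigl(\frak{E}\circ\widetilde{\frak{U}}_i+\frak{U}_i\circ\widetilde{\frak{E}}\bigr)+\bigl(\frak{T}\circ\widetilde{\frak{F}}_i+\frak{F}_i\circ\widetilde{\frak{T}}\bigr)$ that the paper leaves to the reader.
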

According to the sentence below Eq.\ (\ref{psh}) we call 
the displacement given by $\phi(\frak{E}+\varepsilon\frak{T})$ 
the axis of the linear complex of SE(3)-displacements. 

\begin{remark} 
If $\frak{E}+\varepsilon\frak{T}\in \EE$ holds; i.e.\ the corresponding point is located on the Study quadric $\Phi\setminus G$, 
then the related linear complex of SE(3)-displacements consists of all poses, which can be generated from the  
pole displacement ($=$ axis displacement) either by pure rotations about lines of $E^3$ or by pure translations. 
\hfill $\diamond$
\end{remark}

\section{Conclusion}\label{sec:con}

The given kinematic interpretation of all points of the Study quadric's ambient space  $P^7$ 
in terms of X$_4$-displacements can e.g.\ be used for the interactive design of rational motions in the following way (see Fig.\ \ref{fig2}): 

We start with a {\it projective de Casteljau construction} in $P^7$. 
Due to the given interpretation the control points and Farin points of this construction in $P^7$  
correspond to poses in $E^4$, which can be projected orthogonally along the $x_0$-direction 
onto $E^3$. In addition we label the obtained poses of $E^3$ by the $x_0$-coordinate ("{\it kotierte Projektion}"; cf.\ Fig.\ \ref{fig1}). 
As the resulting Farin poses and control poses in $E^3$ 
are not affinely distorted (as done by some other known methods \cite{wagner,roeschel}), 
the user can modify very intuitively the control structure (Farin and control poses as 
well as their $x_0$-heights).  

Moreover, as a meaningful optimization criterion of the resulting rational motion one can use the minimization of the 
maximal translation along the $x_0$-direction during the motion.

\begin{figure}[b]
\begin{center} 
 \begin{overpic}
    [height=55mm]{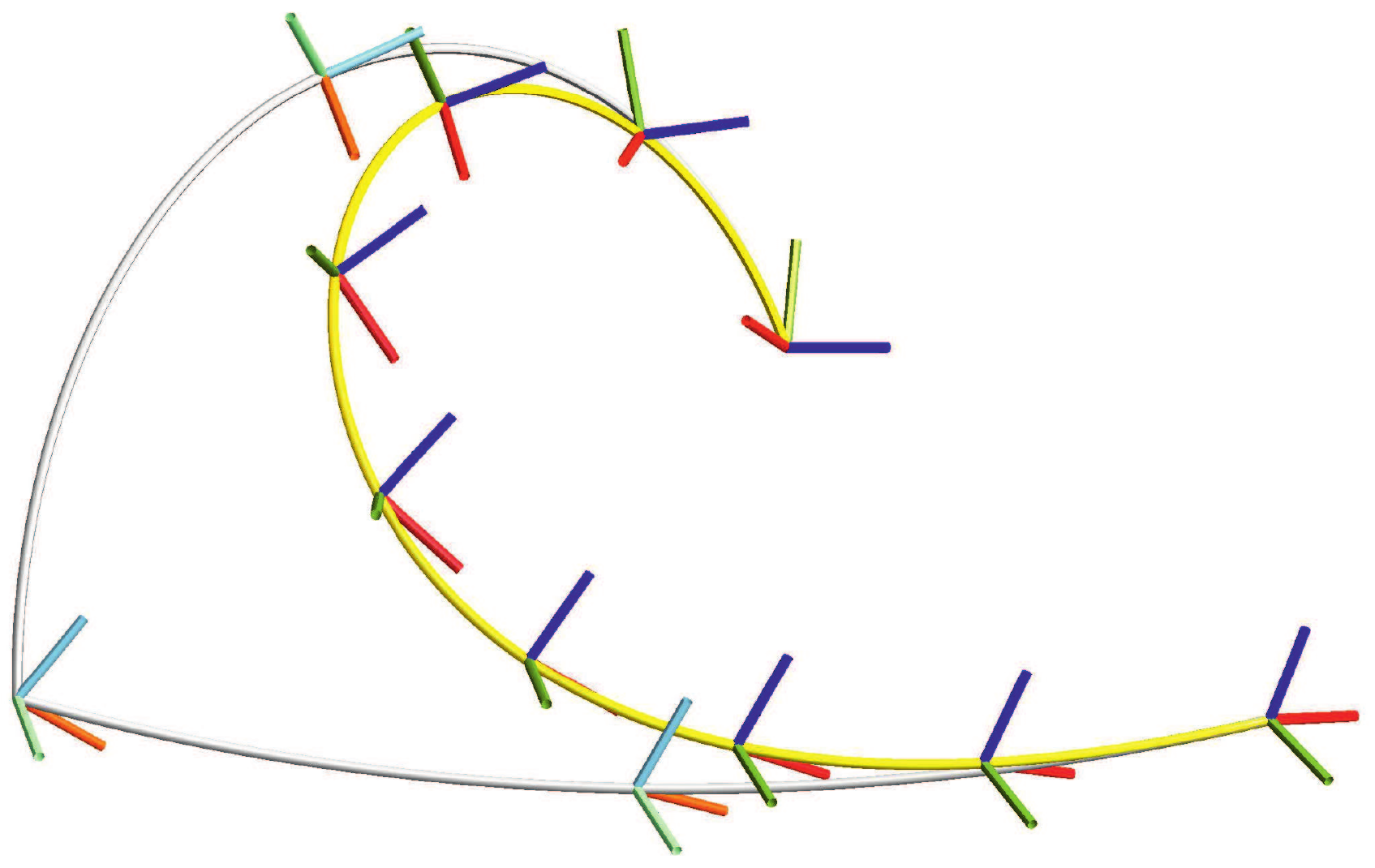}
		\begin{scriptsize}
\put(94.5,13){$(0)$}
\put(91.5,4){end pose}
\put(58.5,40){$(0)$}
\put(54,35){start pose}
\put(-7.5,11){$(-\tfrac{28}{9})$}
\put(-3,5){control pose}
\put(9.5,58){Farin pose}
\put(35,2.5){Farin pose}
\end{scriptsize} 
  \end{overpic} 
\end{center} 
\caption{The illustrated quartic rational motion corresponds to a quadratic Bezier curve in $P^7$. Each {\it Farin pose} 
can only be modified within the vertical Darboux motion determined by the {\it control pose} and {\it start/end pose}, respectively. The related two 
trajectories of the origin are illustrated as gray elliptic arcs. In contrast the {\it control pose} has $7$ degrees of freedom, 
which can be used for motion design. The $x_0$-coordinates of the control, start and end pose are given in parentheses. 
 } 
\label{fig2}
\end{figure}

\begin{acknowledgement}
The author is supported by Grant No.~P~24927-N25 of the Austrian Science Fund FWF within the
project "Stewart Gough platforms with self-motions". 
\end{acknowledgement}

\end{document}